\documentclass{article}

\usepackage{microtype}
\usepackage{graphicx}
\usepackage{subcaption}
\usepackage{booktabs} 

\usepackage{hyperref}

\usepackage[preprint]{icml2026}

\usepackage{amsmath}
\usepackage{amssymb}
\usepackage{mathtools}
\usepackage{amsthm}

\usepackage[capitalize,noabbrev]{cleveref}

\usepackage[T1]{fontenc}      

\usepackage{xcolor}
\definecolor{lightgray}{gray}{0.9}
\usepackage{colortbl}
\usepackage{pifont}

\usepackage{multirow}
\usepackage{makecell}
\usepackage{tabularx}
\usepackage{array}

\usepackage{wrapfig}
\usepackage{float}
\usepackage{enumitem}

\usepackage[most]{tcolorbox}

\usepackage{listings}

\usepackage{algorithm}

\definecolor{boxblue}{RGB}{30,60,180}
\definecolor{boxbg}{RGB}{236,239,255}
\definecolor{titlebg}{RGB}{220,228,255}

\newtcolorbox{textbox1}[1][]{%
  enhanced,
  colback=rqbg,      
  colframe=rqframe,  
  boxrule=1.6pt,
  arc=5pt,
  left=7pt,
  right=10pt,
  top=7pt,
  bottom=7pt,
  fontupper=\bfseries,
  #1
}

\newtcolorbox{textbox2}[2][]{%
  enhanced,
  colback=boxbg,
  colframe=boxblue,
  boxrule=1.6pt,
  arc=7pt,
  left=10pt,
  right=10pt,
  top=10pt,
  bottom=8pt,
  title={},
  fonttitle=\bfseries\color{black},
  coltitle=black,
  attach boxed title to top left={xshift=10pt,yshift=-2.2mm},
  boxed title style={
    colback=titlebg,
    colframe=boxblue,
    boxrule=0pt,
    arc=0pt,
    left=8pt,
    right=8pt,
    top=2pt,
    bottom=2pt
  },
  #1
}

\definecolor{accent}{RGB}{55,90,200}
\definecolor{bg}{RGB}{248,249,255}
\definecolor{titlebg3}{RGB}{236,239,255}

\newtcolorbox{textbox3}[2][]{%
  enhanced,
  colback=bg,
  colframe=accent,
  boxrule=0.8pt,
  arc=6pt,
  left=10pt,
  right=10pt,
  top=8pt,
  bottom=8pt,
  borderline west={3pt}{0pt}{accent},
  title={#2},
  fonttitle=\bfseries\color{black},
  coltitle=black,
  colbacktitle=titlebg3,
  boxed title style={
    boxrule=0pt,
    arc=6pt,
    left=8pt,
    right=8pt,
    top=3pt,
    bottom=3pt,
  },
  drop shadow={black!10},
  #1
}
\definecolor{rqblue}{RGB}{40,80,200}
\definecolor{rqtitlebg}{RGB}{120,140,220}     

\definecolor{rqbg}{HTML}{F3F8F6}
\definecolor{rqframe}{HTML}{4F7D76}
\definecolor{rqtitlebg}{HTML}{2F5F5A}
\definecolor{rqtitle}{HTML}{FFFFFF}

\theoremstyle{plain}
\newtheorem{theorem}{Theorem}[section]

\theoremstyle{definition}

\theoremstyle{remark}

\usepackage[textsize=tiny]{todonotes}

\icmltitlerunning{Can LoRA Fusion Support Cross-Domain Tasks}

\begin{document}

\twocolumn[
  \icmltitle{Can LoRA Fusion Support Cross-Domain Tasks in Cloud–Edge Collaboration?}

  \icmlsetsymbol{equal}{*}
  \icmlsetsymbol{cor}{$\ddagger$}
  \icmlsetsymbol{org}{$\S$}

  \begin{icmlauthorlist}
    \icmlauthor{Yatong Wang}{equal,iie,ucas}
    \icmlauthor{Fali Wang}{equal,org,psu}
    \icmlauthor{Naibin Gu}{iie,ucas}
    \icmlauthor{Zheng Lin}{iie,ucas,cor}
    \icmlauthor{Zhengxiao Liu}{iie,ucas}
    \icmlauthor{Dingyu Yao}{iie,ucas}
    \icmlauthor{Zhiwei Zhang}{psu}
    \icmlauthor{Jianxin Shi}{bh}
    \icmlauthor{Weiping Wang}{iie}
  \end{icmlauthorlist}

  \icmlaffiliation{iie}{Institute of Information Engineering, Chinese Academy of Sciences, Beijing, China}
  \icmlaffiliation{ucas}{School of Cyber Security, University of Chinese Academy of Sciences, Beijing, China}
  \icmlaffiliation{psu}{The Pennsylvania State University, University Park, USA}
  \icmlaffiliation{bh}{Beihang University, Beijing, China}

  \icmlcorrespondingauthor{Zheng Lin}{linzheng@iie.ac.cn}




  \vskip 0.3in
]




\printAffiliationsAndNotice{\icmlEqualContribution, $\dagger$ Corresponding author.}

\begin{abstract}
Cloud-hosted large language models (LLMs) commonly rely on LoRA for domain adaptation, yet domain data are distributed across multiple edge devices and cannot be uploaded due to privacy constraints. This raises a fundamental question: \emph{how can knowledge from multiple private edges be integrated into a cloud LLM for cross-domain problem solving?} A natural solution is to train LoRA adapters locally and fuse them in the cloud; however, existing pipelines rely on unrealistic assumptions that edge devices can host cloud-scale LLMs and are evaluated mainly on single-domain tasks.
To address these limitations, we propose a \texttt{prune--train--recover} framework that enables local LoRA training on pruned models and privacy-preserving cloud integration. We further introduce \texttt{MMLU-CD}, a cross-domain benchmark that composes multiple domain samples into a single instance, enabling explicit evaluation of cross-domain problem solving. This allows us to ask a concrete question: \emph{Can existing LoRA fusion methods support cross-domain tasks in cloud--edge collaboration?}
Our empirical answer is negative{\raisebox{-0.2\height}{\includegraphics[height=1.2em]{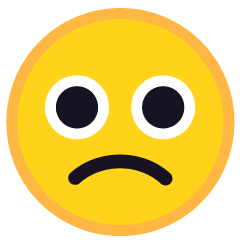}}}. Existing LoRA fusion methods perform poorly on \texttt{MMLU-CD}, often underperforming the base LLM, revealing their inability to support cross-domain problem solving. 
We attribute this failure to parameter conflicts among LoRA adapters and propose a simple conflict-resolution module, \texttt{LoRA-CR}, which mitigates conflicting updates and improves LoRA fusion performance by up to $3.8\%$. These results identify conflict mitigation as a critical yet largely overlooked factor in cloud–edge LoRA fusion, warranting further investigation in future research.

\end{abstract}

\section{Introduction} 
Cloud-hosted large language models (LLMs) have demonstrated strong general-purpose capabilities \cite{zhao2023survey, jin2023parameterefficient, wang2025comprehensive, yang2026dynamic, wang2025needleinatable}. However, in practice, they often require fine-tuning on domain-specific data to acquire reliable domain capabilities \cite{zhao2025cbp,gururangan2020don,feng2025blink}, typically achieved through parameter-efficient adaptation methods such as LoRA \cite{hu2022lora,gu2025beamlora}, series and parallel adapters \cite{hu2023llm}. Such domain data are often distributed across multiple edge devices and are both \textit{privacy-sensitive} and \textit{heterogeneous}. Consequently, directly uploading edge data to the cloud for centralized fine-tuning introduces significant privacy and compliance risks, as sensitive information may be exposed during data transmission, storage, or model training. According to the General Data Protection Regulation (GDPR) \cite{das2018european}, such data leakage can result in severe legal liabilities and financial penalties. Moreover, effectively leveraging distributed and heterogeneous data requires the ability to integrate and reason across domains.
These considerations motivate the study of a privacy-preserving cloud–edge collaboration problem for cross-domain task solving: \textbf{how to effectively integrate domain knowledge from multiple edge devices, without disclosing sensitive data, so that a cloud-hosted LLM can perform cross-domain tasks}. The illustrative problem is shown in Fig.~\ref{fig:problem}.

\begin{figure}[!t]
    \centering
    \includegraphics[width=0.99\linewidth]{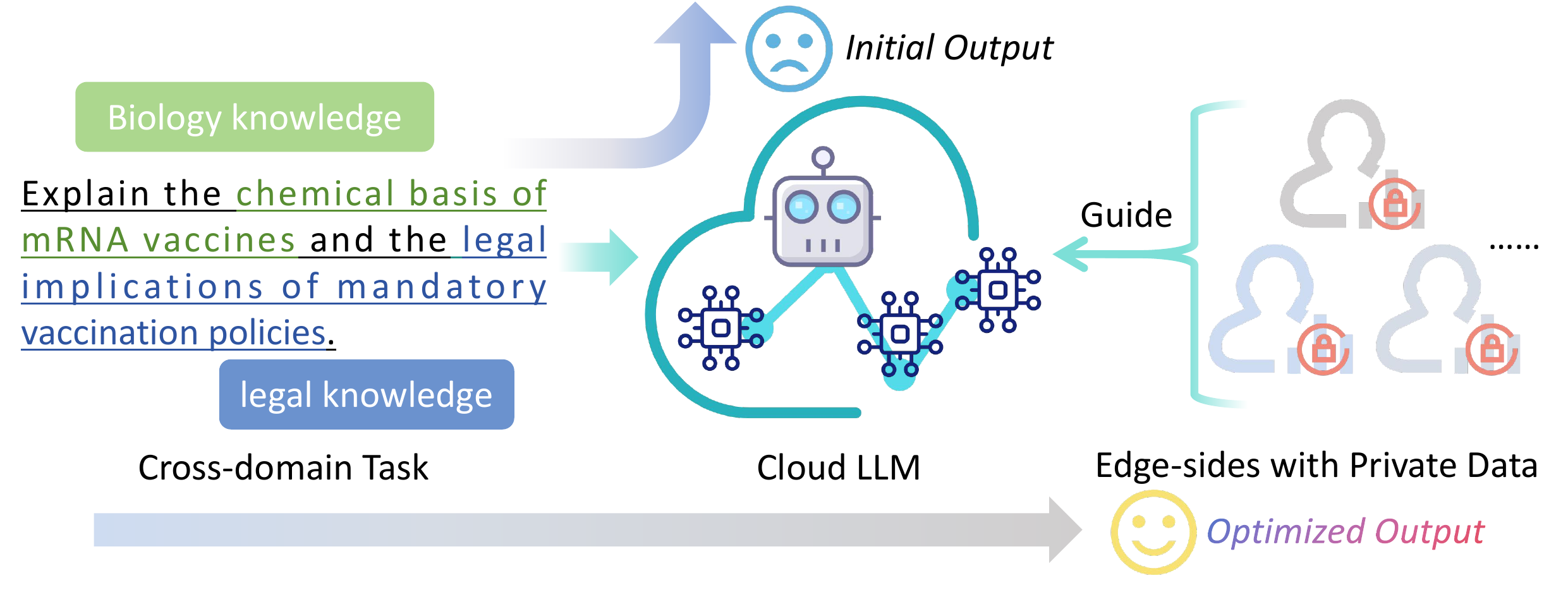}
    \vskip -0.4em
    \caption{Cloud-edge collaboration for cross-domain tasks.}
    \label{fig:problem}
    \vskip -1.4em
\end{figure}

A common practice is the \textit{local LoRA adaptation + global LoRA fusion} paradigm, which downloads a cloud-hosted model to multiple edge devices and performs local adaptation using parameter-efficient LoRA modules. The resulting LoRA adapters are then uploaded to the cloud and aggregated via LoRA fusion \cite{mcmahan2017communication, sun2024improving, guo2025selective}, enabling the cloud-based LLM to leverage distributed domain knowledge without directly accessing raw edge data. Despite its appeal, this paradigm suffers from two fundamental limitations.
\textbf{Limitation 1: Resource mismatch between cloud and edge devices.} Most existing approaches implicitly assume that edge devices are capable of instantiating cloud-scale LLMs. In practice, however, edge devices are subject to stringent memory and computational constraints. For example, mobile devices equipped with approximately 6GB of RAM typically support models with at most 3B parameters \cite{wang2025surveycollaboration}, rendering local instantiation and adaptation of cloud-scale models infeasible. To address this resource mismatch, we propose a \texttt{prune--train--recover} framework. Specifically, the cloud-hosted LLM is first structurally pruned into a compact small language model that can be deployed on resource-constrained edge devices. Each edge device then performs parameter-efficient LoRA fine-tuning and uploads the resulting LoRA adapters to the cloud. In the cloud, these adapters are mapped back to the original parameter space and fused into the full model, allowing the cloud LLM to integrate domain knowledge from edge devices without accessing private data.
\textbf{Limitation 2: Neglect of cross-domain task evaluation.} Existing LoRA fusion methods are mainly evaluated under single-domain reasoning settings. For example, FedAvg \citep{mcmahan2017communication}, FFA-LoRA \citep{sun2024improving}, and FedSA-LoRA \citep{guo2025selective} evaluate LoRA fusion under non-IID or multi-task settings using test sets composed of standard single-domain instances from different clients or tasks. Consequently, these evaluations focus on aggregations of single-domain test samples. We argue that the ultimate objective of LoRA fusion is not only to handle standard test sets corresponding to multiple training data distributions/domains, but also to jointly exploit knowledge encoded in multiple domain-specific LoRA adapters to solve more realistic cross-domain tasks, where a single query depends on knowledge from multiple domains. To this end, we introduce a new dataset, \texttt{MMLU-CD}, which is explicitly designed for cross-domain evaluation and is constructed by pairing samples from different MMLU \citep{hendrycksmeasuring} subject domains and using DeepSeek-R1 \citep{guo2025deepseek} to generate and validate composite multiple-choice questions and answers that test cross-domain reasoning, followed by manual verification to ensure semantic validity and label correctness.

Based on the proposed \texttt{prune--train--recover} framework and the \texttt{MMLU-CD} dataset, we propose a technical research question: \textbf{Can existing LoRA fusion methods support cross-domain tasks under privacy-preserving cloud–edge collaboration?} We evaluate three representative LoRA fusion methods on three cross-domain settings, {SC\&STEM}, {Hum\&SC}, and {Hum\&STEM}, where domain-specific LoRA adapters are independently trained on separate edge devices, uploaded to the cloud, restored to the original parameter space, and fused into the cloud-hosted LLM for cross-domain evaluation. As shown in Fig.~\ref{fig:cross_domain}, LoRA fusion underperforms the base LLM in seven out of nine cases, with only marginal gains in the remaining two. These results indicate that \textbf{existing LoRA fusion methods fail to integrate distributed domain knowledge for cross-domain tasks in cloud–edge collaboration settings.}


We attribute the failure of existing LoRA fusion methods on cross-domain tasks to \textbf{representation conflicts among LoRA adapters trained on different domains}. When linearly aggregated, updates along shared latent directions may impose inconsistent or even cancelling effects, which are not explicitly handled by existing fusion strategies. To address this issue, we propose \texttt{LoRA-CR}, a simple, plug-and-play, conflict-aware preprocessing module for LoRA fusion. It identifies shared latent subspaces across LoRA adapters, quantifies direction-wise conflicts based on alignment consistency, and selectively suppresses conflicting or misaligned updates while preserving consistent domain-specific signals. As a result, \texttt{LoRA-CR} can be seamlessly integrated with existing LoRA fusion methods to enable more reliable cross-domain knowledge integration under privacy-preserving cloud–edge collaboration settings. Experimental results show that our method enables existing LoRA fusion methods to recover performance on cross-domain tasks, highlighting that conflict detection and mitigation are critical yet largely overlooked in prior LoRA fusion research. We hope our findings could motivate further studies on conflict-aware LoRA fusion and cross-domain adaptation under cloud-edge collaboration settings.




Our \textbf{main contributions} are: (i) We identify two fundamental limitations in cloud--edge LLM collaboration, edge resource constraints and the lack of true cross-domain evaluation, and address them via the \texttt{prune--train--recover} framework and the \texttt{MMLU-CD} benchmark. (ii) We formulate and investigate a novel research question, \emph{can existing LoRA fusion methods support cross-domain tasks under privacy-preserving cloud--edge collaboration?} and show that current methods commonly fail in this setting. (iii) We propose \texttt{LoRA-CR}, a simple, plug-and-play, deconflicting module for existing LoRA fusion methods, which mitigates representation conflicts and recovers cross-domain performance, highlighting an overlooked challenge in cloud--edge collaboration that warrants further investigation.

\section{Related Work}
\noindent\textbf{Federated Learning  with LoRA Fusion.}
Federated learning is a widely adopted privacy-preserving paradigm for collaborative model training without sharing local data \cite{ zhang2021survey, wen2023survey,zheng2024safely,kuang2024federatedscope,ye2024openfedllm,jiang2024low}. Recent work on federated learning for LLMs focuses on fusing locally fine-tuned LoRA adapters to achieve strong global performance under privacy constraints \citep{ilharcoediting, yadav2023ties, huang2024lorahub, yang2024adamerging}. Representative methods include FedAvg \citep{mcmahan2017communication}, which aggregates client updates via weighted averaging; FFA-LoRA \citep{sun2024improving}, which further reduces communication overhead by only sharing a subset of LoRA parameters; and approaches such as FedDPA \citep{long2024dual}, SLoRA \citep{babakniya2023slora}, and FDLoRA \citep{qi2024fdlora}, which maintain separate global and local LoRA modules to dynamically balance generalization and personalization, thereby mitigating client drift.
These methods assume full-scale LLM fine-tuning at the edge, which is infeasible in resource-constrained cloud–edge deployments.

\noindent\textbf{Cross-domain Tasks.} 
Traditional cross-domain tasks primarily evaluate a model’s robustness to domain shifts and its ability to transfer knowledge across domains, where each instance belongs to a single task but follows different data distributions \citep{yuqianyi-1, yuqianyi-2, wang2022continual, rame2022fishr, gulrajani2021in, koh2021wilds}. In this setting, cross-domain generalization stems from distributional variation rather than multi-domain dependencies within individual queries. Recent work has extended this notion by assessing models’ abilities to recognize interdisciplinary research topics \citep{kuaxueke-1, zhong2025interdisciplinary, boyko2023interdisciplinary}. However, such studies focus on concept identification rather than problem solving. In contrast, we adopt a stricter definition of cross-domain tasks, where a single query inherently spans multiple domains and requires explicit integration of knowledge, representations, or reasoning capabilities across domains to generate a solution.

\section{Preliminary Knowledge}
\noindent\textbf{LoRA \cite{hu2022lora}.}
Let $\mathbf{W}_0 \in \mathbb{R}^{d_{\text{out}} \times d_{\text{in}}}$ denote the parameters of a module in a pretrained large language model. 
Under the assumption that task-specific parameter updates can be well approximated in a low-rank subspace, LoRA enables parameter-efficient adaptation by modeling the update as a low-rank matrix:
$
\small
\Delta \mathbf{W} = \mathbf{B}\mathbf{A},\;
\mathbf{B} \in \mathbb{R}^{d_{\text{out}} \times r},\;
\mathbf{A} \in \mathbb{R}^{r \times d_{\text{in}}},
$
where $r \ll \min(d_{\text{out}}, d_{\text{in}})$. 
The adapted parameters are then given by
$
\mathbf{W} = \mathbf{W}_0 + \Delta \mathbf{W}.
$
During training, the pretrained parameters $\mathbf{W}_0$ are kept fixed, and only the low-rank factors $\mathbf{A}$ and $\mathbf{B}$ are optimized.


\noindent\textbf{LoRA Fusion.}
In practical settings, training data are often distributed across multiple non-overlapping edge devices with domain-specific datasets
$\{\mathcal{D}_i\}_{i=1}^N$,
with no direct data sharing.
A common approach is to train a separate LoRA adapter for each device:
$
\Delta \mathbf{W}_i = \mathbf{B}_i \mathbf{A}_i, \; i = 1, \dots, N.
$
To integrate the capabilities learned by multiple adapters, LoRA fusion constructs a unified parameter update by linearly combining these domain-specific low-rank updates:

\begin{equation}
\Delta \mathbf{W}_{\text{fusion}}
=
f_{\text{fusion}}\!\left(\{\Delta \mathbf{W}_i\}_{i=1}^N\right),
\end{equation}
where $f_{\text{fusion}}$ denotes a LoRA fusion operator that combines multiple domain-specific LoRA updates, for example, by averaging the corresponding adapters.
The resulting model parameters are given by
$
\mathbf{W} = \mathbf{W}_0 + \Delta \mathbf{W}_{\text{fusion}}.
$
Each $\Delta \mathbf{W}_i$ is a low-rank update direction specialized to domain $\mathcal{D}_i$.

\section{Problem Formulation}

Recent studies have begun to explore how to integrate domain knowledge from multiple edge devices without disclosing sensitive edge data, so that a cloud-hosted LLM can support cross-domain tasks. The prevailing approach performs parameter-efficient LoRA fine-tuning locally on each edge device using domain-specific data, and then uploads the resulting LoRA adapters to the cloud for fusion, thereby avoiding direct data sharing. However, in realistic cloud--edge collaboration settings, such approaches still face several fundamental limitations.
\textbf{Limitation 1: Resource mismatch between the cloud and the edge.}  
Most existing methods implicitly assume that edge devices can instantiate and fine-tune LLMs at cloud scale in order to train local LoRA adapters. In practice, edge devices typically operate under strict memory and computational constraints, making it infeasible to deploy and adapt cloud-scale models locally.
\textbf{Limitation 2: Lack of rigorous cross-domain task evaluation.} 
Existing approaches are commonly evaluated on multi-domain or multi-distribution test sets composed of single-domain instances. Such evaluation protocols fail to capture the more challenging and realistic cross-domain setting, where a single query requires joint reasoning over knowledge from multiple domains.
To address these limitations, we propose the \texttt{prune--train--recover} framework in Sec.~\ref{sec:Prune-Train-Recover} to resolve the cloud--edge resource mismatch, and introduce a strictly cross-domain dataset, \texttt{MMLU-CD}, constructed based on MMLU in Sec.~\ref{sec:MMLU-CD}.

These considerations further motivate the following technical research question.  

\begin{tcolorbox}[
  colback=rqbg,
  colframe=rqframe,
  boxrule=1pt,
  arc=4pt,
  left=10pt,
  right=10pt,
  top=8pt,
  bottom=8pt,
  title={\textsc{Research Question}},
  fonttitle=\bfseries,
  coltitle=rqtitle,
  colbacktitle=rqtitlebg,
]
\small
Given a cloud-hosted LLM \( M \) and a collection of edge-domain datasets \( \{\mathcal{D}_i\}_{i=1}^N \) that cannot be shared with the cloud due to privacy constraints, we locally train a LoRA adapter for each dataset \( \mathcal{D}_i \) under edge resource budgets, obtaining adapters \( \{\Delta \mathbf{W}_i\}_{i=1}^N \). These adapters are then uploaded to the cloud and fused via a LoRA fusion operator
$\Delta \mathbf{W}_{\text{fusion}}=f_{\text{fusion}}\big(\{\Delta \mathbf{W}_i\}_{i=1}^N \big),$
which is integrated into the base model as \( M + \Delta \mathbf{W}_{\text{fusion}} \) to enable cross-domain inference.  
\textbf{Under this privacy-preserving and resource-limited cloud--edge collaboration, can \( f_{\text{fusion}} \) effectively support cross-domain tasks where each test instance requires multi-edge domain knowledge?}
\end{tcolorbox}

\section{Insight: Catastrophic Failure of LoRA Fusion on Cross-domain Tasks}
\label{sec:insight}
This section aims to answer the research question posed above. To this end, we first address the identified limitations by introducing the \texttt{prune--train--recover} framework and the \texttt{MMLU-CD} cross-domain benchmark. We then evaluate three existing LoRA fusion methods within this framework and on this dataset to assess their performance on cross-domain tasks and give our observations.

\subsection{\texttt{Prune-Train-Recover} Framework}
\label{sec:Prune-Train-Recover}

\begin{figure}[t]
    \centering
    \includegraphics[width=\linewidth]{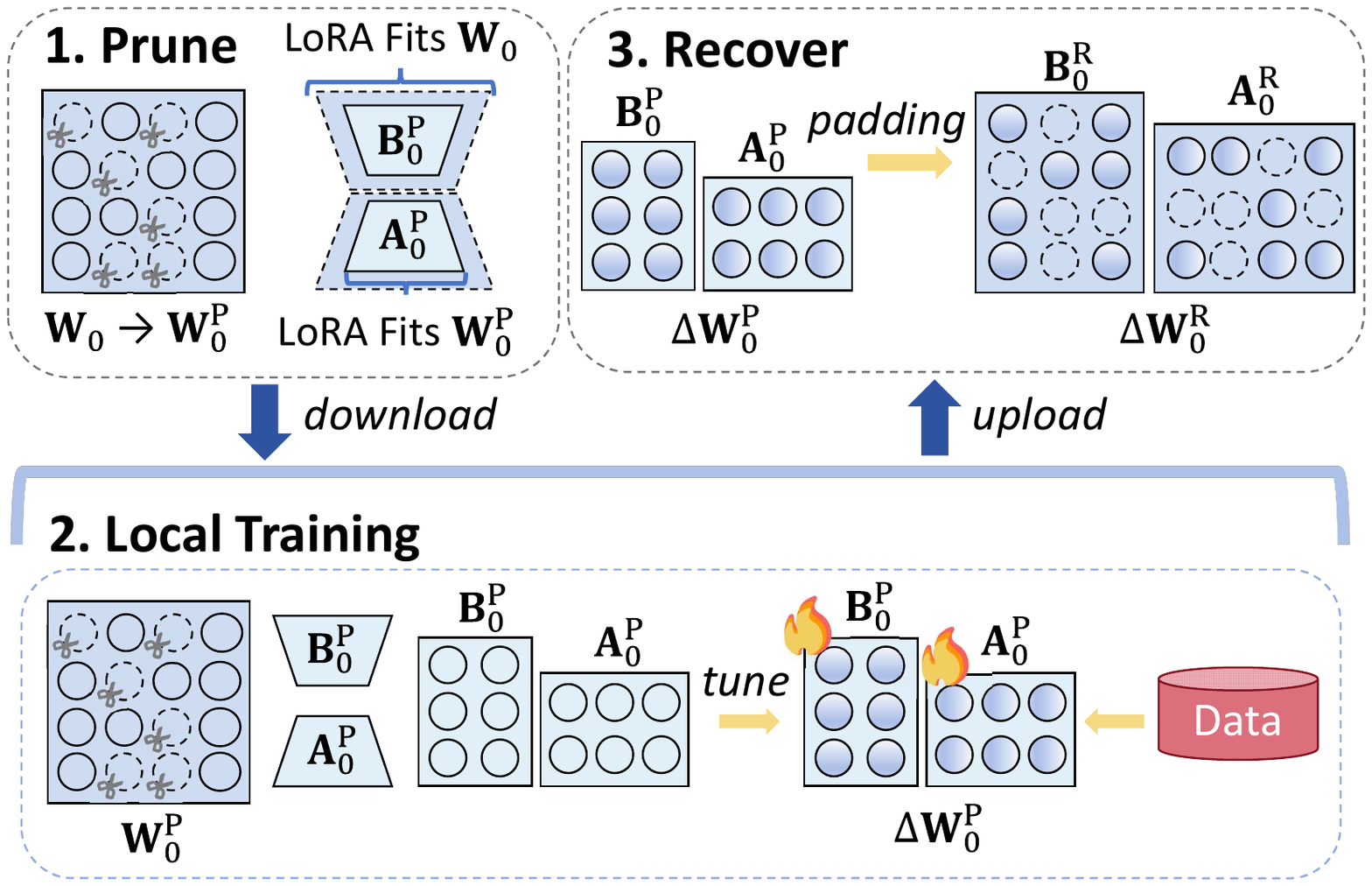}
    \caption{The \texttt{Prune-Train-Recover} Framework.}
    \label{fig:pipeline}
\end{figure}

To address edge-side computational constraints, we propose a \texttt{prune--train--recover} framework, as shown in Fig.~\ref{fig:pipeline}, which comprises three steps: pruning, local training, and recovery.
The framework first applies structured pruning to a copy of the cloud-hosted LLM, yielding a compact model suitable for resource-constrained edge devices.
This pruned model enables memory-feasible local adaptation, where a LoRA adapter is trained on local data.
After being transmitted back to the cloud, the adapter undergoes a recovery process that restores its original dimensionality by only zero-padding parameters.
Next, we detail each step. 

\paragraph{Local-Adaptive Pruning.}
Consider a cloud-hosted LLM \( M \) with a parameter matrix
\( \mathbf{W}_0 \in \mathbb{R}^{m \times n} \).
For clarity of exposition, we describe the pruning process using a single weight matrix; the same procedure is applied analogously to other parameter matrices in feed-forward networks and attention modules.
Let \( S \) denote the total parameter size of \( M \).
To satisfy the resource constraints of a local edge device, we apply a structured pruning strategy to \( M \), as shown in Fig.~\ref{fig:pipeline}(1).
Specifically, structured pruning selects subsets of rows and columns corresponding to coherent structural units.
Let
\(
\mathcal{I}_{\text{row}} \subseteq \{1,\ldots,m\}
\)
and $\mathcal{I}_{\text{col}} \subseteq \{1,\ldots,n\}$
denote the indices of rows and columns retained after pruning.
The resulting pruned weight matrix is defined as
\begin{equation}
\small 
\mathbf{W}_i^{P}
= \mathbf{W}_0[\mathcal{I}_{\text{row}}, \mathcal{I}_{\text{col}}]
\in \mathbb{R}^{m' \times n'},
\label{eq:pruned_weight}
\end{equation}
where \( m' = |\mathcal{I}_{\text{row}}| \) and \( n' = |\mathcal{I}_{\text{col}}| \).
Let \( M_i^{P} \) denote the pruned model instantiated on device, and let \( S^{P} \) denote its total parameter size.
Structured pruning is applied consistently across selected layers of the model, and the overall pruning ratio is defined as
$\alpha = \frac{S - S^{P}}{S}, \alpha \in [0,1].$
By construction, \( S^{P} < S \) and \( (m', n') < (m, n) \), ensuring that the pruned model \( M_i^{P} \) satisfies the memory and computational constraints of the local device.
Additional details of the pruning strategy are provided in Appendix~\ref{app_pruning}.

\paragraph{Local Training.}
Given a local dataset \( \mathcal{D} =\{x\}\), we perform local fine-tuning on the pruned model, as shown in Fig.\ref{fig:pipeline}(2).
Let \( \mathbf{W}_0^{P} \) denote the pruned parameters, which remain frozen during local training.
We initialize a LoRA update on the pruned model and parameterize it as
$
\Delta \mathbf{W}_0^{P} = \mathbf{B}_0^{P}\mathbf{A}_0^{P},$
where \( \mathbf{B}_0^{P} \in \mathbb{R}^{m' \times r} \), \( \mathbf{A}_0^{P} \in \mathbb{R}^{r \times n'} \), and \( r \ll \min(m', n') \).
Local adaptation is performed by optimizing the LoRA parameters using the standard \emph{next-token prediction} loss function.
For a training sequence \( x = (x_1, \ldots, x_T) \), the loss is defined as
\begin{equation}
    \small 
\mathcal{L}_{\text{NTP}}
= - \mathbb{E}_{x \sim \mathcal{D}}
\sum_{t=1}^{T}
\log p\!\left(x_{t+1} \mid x_{\le t}; \mathbf{W}_0^{P} + \Delta \mathbf{W}_0^{P}\right)
\end{equation}
Only the low-rank factors \( \mathbf{A}_0^{P} \) and \( \mathbf{B}_0^{P} \) are updated, while the pruned backbone parameters \( \mathbf{W}_0^{P} \) are fixed, enabling memory-efficient local training.
After fine-tuning, the resulting LoRA adapter \( \Delta \mathbf{W}_0^{P} \) encodes domain-specific knowledge from \( \mathcal{D} \) and is subsequently transmitted to the cloud.

\paragraph{LoRA Recovery.}
After local training on the pruned parameters \( \mathbf{W}_0^{P} \), the edge device uploads its LoRA factors \(\mathbf{B}_0^{P} \in \mathbb{R}^{m' \times r}\) and \(\mathbf{A}_0^{P} \in \mathbb{R}^{r \times n'}\) to the cloud.
Because they are trained on pruned dimensions and thus incompatible with the original cloud backbone.
To restore compatibility, we embed the pruned factors back into the full parameter space using the structured pruning indices, as shown in Fig.~\ref{fig:pipeline}(3).
Specifically, we define binary selection matrices
\(\mathbf{S}_{\text{row}} \in \{0,1\}^{m \times m'}\) and
\(\mathbf{S}_{\text{col}} \in \{0,1\}^{n \times n'}\),
which zero-pads and realigns the pruned dimensions.
The recovered LoRA factors are obtained by
\begin{equation}
\small 
\mathbf{B}_0^{R} = \mathbf{S}_{\text{row}} \mathbf{B}_0^{P},
\qquad
\mathbf{A}_0^{R} = \mathbf{A}_0^{P} \mathbf{S}_{\text{col}}^{\top},
\label{eq:lora_recover_simple}
\end{equation}
yielding the recovered update
\(
\Delta \mathbf{W}_0^{R} = \mathbf{B}_0^{R} \mathbf{A}_0^{R} \in \mathbb{R}^{m \times n}
\),
which is compatible with the base model parameters \( \mathbf{W}_0 \).


\subsection{\texttt{MMLU-CD}: Cross-domain Tasks}
\label{sec:MMLU-CD}
Addressing local capacity constraints via the pruning framework, edge devices can perform local training and upload LoRA adapters to the cloud. This enables us to study the second limitation: lack of rigorous cross-domain task evaluation. To this end, we construct a new benchmark, \texttt{MMLU-CD}, to test existing representative LoRA fusion methods. 

\noindent\textbf{MMLU \cite{hendrycksmeasuring}} is a multi-task benchmark spanning 57 academic subjects, designed to evaluate models’ understanding and reasoning capabilities across diverse domains. Each instance is formulated as a multiple-choice question with a single correct answer.
MMLU also provides a coarse-grained taxonomy that groups subjects into high-level academic domains, including STEM, Social Sciences, and Humanities (with a small “Other” category). A more detailed description of MMLU is provided in Appendix~\ref{app_datasets}.
Due to the imbalance in sample sizes across fine-grained subjects, we adopt these coarse-grained domains as the units for domain partitioning. This design facilitates the construction of cross-domain datasets with more balanced samples and distinct domain-specific knowledge characteristics.

\noindent\textbf{\texttt{MMLU-CD} Construction Pipeline.}
We construct the multiple-choice cross-domain dataset \texttt{MMLU-CD} via a three-stage pipeline: (i) domain pool selection, (ii) cross-domain synthesis, and (iii) human verification.
First, we follow the MMLU coarse-grained subject taxonomy to build domain-specific pools, which serve as in-domain training data for local client adaptation, simulating privacy-preserving edge data (Tab.~\ref{tab:mmlu_cd_stats}).
Next, we generate cross-domain instances by pairing two distinct high-level domains, sampling seed questions from their respective pools, and using DeepSeek-R1~\citep{guo2025deepseek} to synthesize coherent multiple-choice questions. The synthesis involves assessing merge feasibility, extracting core knowledge from each domain, and generating a labeled question (prompt in Appendix, Tab.~\ref{tab:prompt}). This process yields three cross-disciplinary subsets, Hum \& SC, SC \& STEM, and Hum \& STEM, whose statistics are in Tab.~\ref{tab:mmlu_cd_stats}. These instances constitute the \emph{cross-domain} test set, where each question requires joint reasoning across both domains rather than isolated single-domain knowledge.
Finally, we conduct human verification by sampling synthesized instances to remove incoherent questions, single-domain-solvable cases, and incorrect annotations. 
\begin{table}[t]
\centering
\caption{Statistics of the \texttt{MMLU-CD} dataset, including in-domain training set and cross-domain test sets.}
\begin{tabular}{lcc}
\toprule
\textbf{Split} & \textbf{Subset} & \textbf{Amount} \\
\midrule
\multirow{3}{*}{\makecell{In-Domain \\(Training Set)}} 
  & Humanities & 4,178 \\
  & Social Sciences & 2,731 \\
  & STEM & 2,790 \\
\cmidrule(lr){2-3}
  & \textbf{Total} & \textbf{9,699} \\
\midrule
\multirow{3}{*}{\makecell{Cross-Domain \\ (Test Set)}} 
  & Hum \& SC & 500 \\
  & SC \& STEM & 500 \\
  & Hum \& STEM & 500 \\
\cmidrule(lr){2-3}
  & \textbf{Total} & \textbf{1,500} \\
\bottomrule
\end{tabular}
\label{tab:mmlu_cd_stats}
\end{table}

\noindent\textbf{Evaluation Metric.}
Test instances in \texttt{MMLU-CD} are multiple-choice questions. We evaluate task performance using accuracy, i.e., the percentage of instances for which the predicted answer matches the ground-truth label.

\subsection{Pilot Experiments of Existing LoRA Fusion on Cross-domain Tasks}
\label{sec:pilot_exp}



We simulate practical cloud–edge collaboration on \texttt{MMLU-CD} by using multiple in-domain datasets as domain-specific training data on local edge devices, and the cross-domain dataset as the cloud-side evaluation test set. 
Specifically, we use LLaMA-3-8B \citep{dubey2024llama} as the cloud-side backbone model and consider two edge devices, each holding data from a distinct domain, such as humanities (Hum) and social sciences (SC). Each device trains a pruned model with a domain-specific LoRA adapter locally, which is then uploaded to the cloud for recovery and fusion. Denote the recovered LoRA adapters as \( \{\Delta \mathbf{W}_i^{R}\}_{i=1}^N \). 
We test three existing LoRA fusion methods, FedAvg \citep{mcmahan2017communication}, FFA-LoRA \citep{sun2024improving}, and FedSA-LoRA \citep{guo2025selective} (details in Appendix~\ref{app_lora_fusion}), which integrate the recovered adapters into the backbone as
\begin{equation}
\begin{aligned}
\mathbf{W}_\text{new} &= \mathbf{W} + \Delta \mathbf{W}_{\text{fusion}}, \\
\Delta \mathbf{W}_{\text{fusion}} &= f_{\text{fusion}}\!\left(\{\Delta \mathbf{W}_i^{R}\}_{i=1}^N\right).
\end{aligned}
\end{equation}
where $f_{\text{fusion}}$ are the fusion function. 
Fig.~\ref{fig:cross_domain} compares LoRA fusion performance across three cross-domain test subsets. 
On SC\&STEM, all methods underperform the base model (with FedSA-LoRA exhibiting the largest drop, from 33.8\% to 31\%), and on Hum\&STEM, FedSA-LoRA again shows a clear degradation (37.6\% to 34.9\%), while FedAvg/FFA-LoRA remain close to the base model. 
The marginal gain is observed on Hum\&SC only with FedSA-LoRA (44.1\% vs.\ base 44.0\%), but this improvement does not generalize to the other domain pairs. 
Overall, across nine cross-domain evaluations, LoRA fusion methods underperform the base LLM without LoRA fusion in seven cases, while the remaining two cases exhibit only marginal improvements. 
This leads to the following observation:


\begin{tcolorbox}[
  colback=rqbg,
  colframe=rqframe,
  boxrule=1pt,
  arc=4pt,
  left=10pt,
  right=10pt,
  top=8pt,
  bottom=8pt,
  title={\textsc{Our Observation}},
  fonttitle=\bfseries,
  coltitle=rqtitle,
  colbacktitle=rqtitlebg,
]
\textbf{Under privacy-preserving cloud–edge collaboration settings, existing LoRA fusion methods fail to effectively integrate distributed domain knowledge and perform poorly on cross-domain tasks.}


\end{tcolorbox}

\begin{figure}[t]
    \centering
    \includegraphics[width=0.95\linewidth]{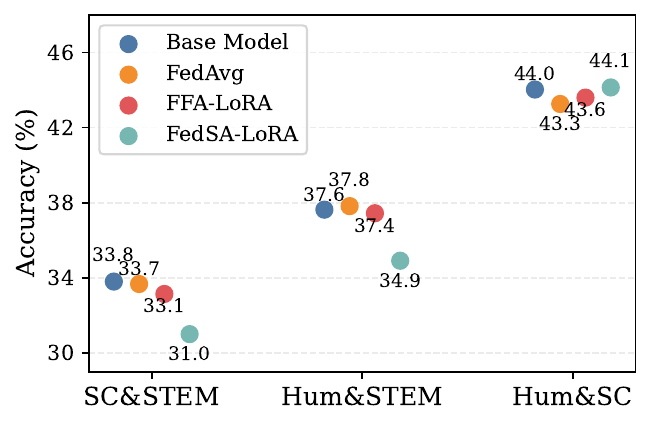}
    \caption{Comparison of LoRA fusion methods on three \texttt{MMLU-CD} cross-domain test subsets using base model LLaMA-3-8B.}
    \label{fig:cross_domain}
\end{figure}


\begin{algorithm}[t]
\caption{LoRA-CR (Conflict Resolution)}
\label{alg:lora-cr}
\begin{algorithmic}[1]
\REQUIRE Recovered LoRAs $\{\Delta \mathbf{W}_i^{R}\}_{i=1}^N$ 
\ENSURE Conflict-resolution LoRAs $\{\Delta \mathbf{W}_i^{CR}\}_{i=1}^N$

\STATE $\mathbf{U}_{\text{share}}\leftarrow \mathrm{SVD}([\Delta \mathbf{W}_1^{R},\ldots,\Delta \mathbf{W}_N^{R}])$ 

\STATE $\mathbf{Z}_i \leftarrow \mathbf{U}_{\text{share}}^\top \Delta \mathbf{W}_i^{R}$; \textbf{for} $i=1$ \textbf{to} $N$

\FOR{$k=1$ \textbf{to} $r$} 
    \STATE $\alpha_{i,k}\leftarrow \|(\mathbf{Z}_i)_{k,:}\|_2\ \forall i$
    \STATE $\bar{\mathbf{z}}_k \leftarrow \frac{\sum_i \alpha_{i,k}(\mathbf{Z}_i)_{k,:}}{\sum_i \alpha_{i,k}}$
    \STATE $c_k \leftarrow \frac{\sum_i \alpha_{i,k}\big(1-\cos((\mathbf{Z}_i)_{k,:},\bar{\mathbf{z}}_k)\big)}{2\sum_i \alpha_{i,k}}$
    \STATE $g_k \leftarrow 1-c_k$
    \FOR{$i=1$ \textbf{to} $N$} 
        \STATE $s_{i,k}\leftarrow \max\{0,\cos((\mathbf{Z}_i)_{k,:},\bar{\mathbf{z}}_k)\}$  
        \STATE $(\mathbf{Z}_i^{\text{fuse}})_{k,:}\leftarrow g_k\, s_{i,k}\, (\mathbf{Z}_i)_{k,:}$
    \ENDFOR
\ENDFOR

\STATE $\Delta \mathbf{W}_i^{CR}\leftarrow \mathbf{U}_{\text{share}}\, \mathbf{Z}_i^{\text{fuse}}$; \textbf{for} $i=1$ \textbf{to} $N$

\STATE \textbf{return} $\{\Delta \mathbf{W}_i^{CR}\}_{i=1}^N$
\end{algorithmic}
\end{algorithm}

\section{An Exploratory Conflict Resolution Method}
This section proposes a simple exploratory conflict resolution method in Algo. \ref{alg:lora-cr} and empirically demonstrates the necessity of conflict resolution for cloud–edge LoRA fusion.

\subsection{Failure Attribution to Representation Conflicts} 
Existing LoRA fusion methods exhibit failed performance on cross-domain tasks. Inspired by \citet{zhang2025lori,lai2025mediator}, which show that parameter conflicts across fine-tuned models or adapters can cause performance degradation, we attribute this failure to \emph{representation conflicts} among LoRA adapters trained on different domains. Specifically, updates along shared latent parameter directions may induce inconsistent or mutually canceling effects when linearly aggregated. As most existing LoRA fusion strategies rely on simple parameter-space aggregation, such conflicts are neither explicitly identified nor effectively mitigated.

\begin{table*}[tb]
\centering
\renewcommand{\arraystretch}{1.25}
\caption{Accuracy (\%) on cross-domain \texttt{MMLU-CD} test subsets before and after conflict resolution.}
\label{tab:cross_domain_cr}
\begin{tabular}{l c c c c c c c c}
\hline
\textbf{Method} 
& \textbf{SC\&STEM} 
& $\Delta$ 
& \textbf{Hum\&STEM} 
& $\Delta$ 
& \textbf{Hum\&SC} 
& $\Delta$ 
& \textbf{Average} 
& \textbf{Avg. $\Delta$} \\
\hline
Base Model 
& 33.80 & -- 
& 37.63 & -- 
& 44.03 & -- 
& 38.49 & -- \\
\hline
FedAvg 
& 33.67 & -- 
& 37.82 & -- 
& 43.26 & -- 
& 38.25 & -- \\
\rowcolor{lightgray}
+ LoRA-CR 
& 33.24 & $-0.43$ 
& 39.03 & $+1.21$ 
& 45.83 & $+2.57$ 
& 39.37 & $+1.12$ \\
\hline
FedSA-LoRA 
& 31.00 & -- 
& 34.91 & -- 
& 44.14 & -- 
& 36.68 & -- \\
\rowcolor{lightgray}
+ LoRA-CR 
& 33.67 & $+2.67$ 
& 38.41 & $+3.50$ 
& 47.97 & $+3.83$ 
& 40.02 & $+3.34$ \\
\hline
FFA-LoRA 
& 33.14 & -- 
& 37.44 & -- 
& 43.60 & -- 
& 38.06 & -- \\
\rowcolor{lightgray}
+ LoRA-CR 
& 33.93 & $+0.79$ 
& 38.77 & $+1.33$ 
& 46.35 & $+2.75$ 
& 39.68 & $+1.62$ \\
\hline
\end{tabular}
\end{table*}

\subsection{\texttt{LoRA-CR}: Proposed Conflict Resolution}

To address this limitation, we propose \texttt{LoRA-CR}, a conflict-aware, plug-and-play preprocessing module for LoRA fusion, resolving representation conflicts among recovered LoRA adapters prior to fusion (Algo.~\ref{alg:lora-cr}). Formally, the fusion pipeline is defined as
\begin{equation}
\mathbf{W}_{\text{fusion}}
=
f_{\text{fusion}}\!\left(
f_{\text{CR}}\big(\{\Delta \mathbf{W}_i^{R}\}_{i=1}^N\big)
\right),
\end{equation}
where $f_{\text{CR}}$ denotes the proposed conflict resolution function that produces a set of de-conflicted LoRA adapters,
\begin{equation}
\{\Delta \mathbf{W}_i^{CR}\}_{i=1}^N
=
f_{\text{CR}}\big(\{\Delta \mathbf{W}_i^{R}\}_{i=1}^N\big).
\end{equation}
The resulting adapters can be seamlessly integrated with existing LoRA fusion strategies. We next introduce a conflict detection metric and the corresponding resolution function.

\paragraph{Conflict Detection Metric.}
We measure the conflict among LoRA adapters via a shared subspace as follows:

\begin{theorem}[Direction-wise Conflict Metric]
\label{thm:conflict_metric}
Given a set of recovered LoRA updates 
$\{\Delta \mathbf{W}_i^{R}\}_{i=1}^N$,
there exists a shared latent subspace obtained via singular value decomposition (SVD) \cite{golub2013matrix}, where each LoRA adapter admits a direction-wise representation.
For each shared direction $k \in \{1,\dots,r\}$, where $r$ denotes the dimension of the shared subspace, we define a \emph{conflict score}
\begin{equation}
\label{eq:conflict_score_main}
c_k
=
\frac{
\sum_{i=1}^N
\|\mathbf{z}_{i,k}\|_2
\bigl(1 - \cos(\mathbf{z}_{i,k}, \bar{\mathbf{z}}_k)\bigr)
}{
2 \sum_{i=1}^N \|\mathbf{z}_{i,k}\|_2
}
\in [0,1],
\end{equation}
where $\mathbf{z}_{i,k}$ denotes the representation of the $i$-th LoRA update along direction $k$, and $\bar{\mathbf{z}}_k$ is the corresponding energy-weighted consensus direction.
The score $c_k$ quantifies the degree of misalignment among LoRA updates along direction $k$: $c_k = 0$ indicates perfect alignment, while larger values indicate stronger representation conflict.
\end{theorem}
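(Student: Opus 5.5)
The plan has three parts: build the shared subspace and the direction-wise representation with the SVD, then show that each summand in \eqref{eq:conflict_score_main} is bounded, and finally characterize when $c_k=0$.

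\textbf{Step 1: construction.} Form the horizontal concatenation $\mathbf{M}=[\Delta \mathbf{W}_1^{R},\ldots,\Delta \mathbf{W}_N^{R}]\in\mathbb{R}^{m\times Nn}$. Take its thin SVD $\mathbf{M}=\mathbf{U}\boldsymbol{\Sigma}\mathbf{V}^\top$, and let $\mathbf{U}_{\text{share}}$ consist of the leading $r$ left singular vectors, with $r\le \operatorname{rank}(\mathbf{M})$. Its columns are orthonormal, so they span a well-defined subspace of the common column space; this is the claimed existence.

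Next set $\mathbf{Z}_i=\mathbf{U}_{\text{share}}^\top\Delta\mathbf{W}_i^{R}$ and $\mathbf{z}_{i,k}=(\mathbf{Z}_i)_{k,:}$, exactly as in Algorithm~\ref{alg:lora-cr}. If $r$ is the full rank of $\mathbf{M}$, then $\mathbf{U}_{\text{share}}\mathbf{U}_{\text{share}}^\top$ is the projector onto a space containing every column of every $\Delta\mathbf{W}_i^{R}$. Hence
\[
\Delta\mathbf{W}_i^{R}=\sum_{k}\mathbf{u}_k\mathbf{z}_{i,k},
\]
which is a genuine direction-wise decomposition. With truncation, the same identity holds up to the Eckart--Young residual.

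The consensus vector is $\bar{\mathbf{z}}_k=\sum_i\|\mathbf{z}_{i,k}\|_2\mathbf{z}_{i,k}/\sum_i\|\mathbf{z}_{i,k}\|_2$. Its denominator is positive whenever direction $k$ carries nonzero energy, and I will assume this.

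\textbf{Step 2: range.} Each cosine lies in $[-1,1]$, so $0\le 1-\cos(\mathbf{z}_{i,k},\bar{\mathbf{z}}_k)\le 2$. Multiplying by the nonnegative weights $\|\mathbf{z}_{i,k}\|_2$ and summing gives
\[
0\le \sum_{i=1}^N\|\mathbf{z}_{i,k}\|_2\bigl(1-\cos(\mathbf{z}_{i,k},\bar{\mathbf{z}}_k)\bigr)\le 2\sum_{i=1}^N\|\mathbf{z}_{i,k}\|_2 .
\]
Dividing by $2\sum_i\|\mathbf{z}_{i,k}\|_2$ yields $c_k\in[0,1]$.

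\textbf{Step 3: interpretation.} Since the terms are nonnegative, $c_k=0$ holds if and only if every nonzero $\mathbf{z}_{i,k}$ has cosine $1$ with $\bar{\mathbf{z}}_k$. That means they are all positively parallel to $\bar{\mathbf{z}}_k$, i.e.\ perfectly aligned. Conversely, if all the $\mathbf{z}_{i,k}$ are positive multiples of a common vector, then $\bar{\mathbf{z}}_k$ is a positive multiple of that vector too, so $c_k=0$. Monotonicity in misalignment follows because $1-\cos$ is increasing in the angle.

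\textbf{Main obstacle.} The delicate case is a degenerate consensus, where $\bar{\mathbf{z}}_k=\mathbf{0}$ because the energy-weighted updates cancel exactly, and the cosine is then undefined. I would handle it by convention, setting $\cos(\cdot,\mathbf{0})=-1$ or equivalently $c_k=1$, which reflects maximal conflict. This is consistent with the upper bound and with the gate $g_k=1-c_k=0$ in Algorithm~\ref{alg:lora-cr}. The attainability of $c_k=1$ is also worth remarking on: it requires every nonzero $\mathbf{z}_{i,k}$ to be exactly antiparallel to a nonzero consensus, which essentially cannot happen. So the bound is not sharp, but the claimed range $[0,1]$ still holds.
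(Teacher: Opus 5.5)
Your proposal is correct and follows essentially the same route as the paper: take $\mathbf{U}_{\text{share}}$ from the SVD of the concatenated updates, project to obtain $\mathbf{z}_{i,k}$, and bound $c_k$ as a nonnegatively weighted average of the terms $\tfrac12\bigl(1-\cos(\mathbf{z}_{i,k},\bar{\mathbf{z}}_k)\bigr)\in[0,1]$, with perfect alignment giving $c_k=0$. Your handling of the degenerate cases goes beyond the paper's proof and is sound, and two points in it can be sharpened: the positive denominator is automatic for $k\le\operatorname{rank}(\mathbf{M})$, since $\sum_i\|\mathbf{z}_{i,k}\|_2^2=\sigma_k^2>0$; and $c_k=1$ is not merely unlikely but impossible when the consensus is nonzero, because $\|\bar{\mathbf{z}}_k\|_2^2$ is a nonnegatively weighted average of the inner products $\langle\mathbf{z}_{i,k},\bar{\mathbf{z}}_k\rangle$, which forces at least one positive cosine.
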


The construction of the shared subspace, the definitions of $\mathbf{z}_{i,k}$ and $\bar{\mathbf{z}}_k$, and the proof of $c_k$ are in Appendix~\ref{app:conflict_metric}.

\paragraph{Conflict Resolution Function $\boldsymbol{f_{\text{CR}}}$.}
The conflict resolution function $f_{\text{CR}}$ consists of three steps: (i) projecting recovered LoRA updates into a shared subspace, (ii) performing direction-wise conflict gating and consistency-aware attenuation, and (iii) reconstructing de-conflicted LoRA parameters. Formally, given a set of recovered LoRA adapters $\{\Delta \mathbf{W}_i^{R}\}_{i=1}^N$, the conflict resolution function is defined as
\begin{equation}
\label{eq:f_cr}
\begin{aligned}
f_{\text{CR}}\big(\{\Delta \mathbf{W}_i^{R}\}_{i=1}^N\big)
&=
\left\{
\mathbf{U}_{\text{share}}
\begin{bmatrix}
(\mathbf{z}_{i,1}^{\text{fuse}})^\top \\
\vdots \\
(\mathbf{z}_{i,r}^{\text{fuse}})^\top
\end{bmatrix}
\right\}_{i=1}^{N}, \\
\mathbf{z}_{i,k}^{\text{fuse}}
&= g_k \, s_{i,k}\, \mathbf{z}_{i,k}.
\end{aligned}
\end{equation}
where $\mathbf{z}_{i,k}$ denotes the projection of $\Delta \mathbf{W}_i^{R}$ onto the $k$-th shared direction, and $\mathbf{U}_{\text{share}} \in \mathbb{R}^{d \times r}$ is the shared subspace basis used to align LoRA updates across domains. The scalar $g_k = 1 - c_k$ is the conflict gate derived from the proposed conflict metric, and $s_{i,k}$ is the directional consistency score measuring the alignment between the $i$-th LoRA update and the consensus direction along dimension $k$. Detailed definitions of all variables are in Appendix~\ref{app:LoRA-CR}.

The output $\{\Delta \mathbf{W}_i^{CR}\}_{i=1}^N$ corresponds to a set of de-conflicted LoRA adapters, which can be seamlessly integrated into any downstream LoRA fusion operator $f_{\text{fusion}}$ without modifying the fusion procedure itself.

\subsection{Experiments}

\paragraph{Experimental Setup and Metrics.}
We follow the experimental protocol in Sec.~\ref{sec:pilot_exp}, with full implementation details in Appendix~\ref{app_exp_setup}. Specifically, in our setup, LLaMA-3-8B~\citep{dubey2024llama} serves as the cloud-side backbone model and is pruned with a pruning ratio of $\alpha = 0.6$. The pruned model is then deployed to local edge devices, where it is adapted using LoRA modules of rank$=8$ trained on the \texttt{MMLU-CD} in-domain training data. The resulting LoRA adapters are recovered at the cloud and fused using our preprocessing module \texttt{LoRA-CR} method, as well as fusion methods, including FedAvg, FFA-LoRA, and FedSA-LoRA.
We assess performance using two metrics: (i) the average conflict score $\bar{c} = \frac{1}{N}\sum_{i=1}^N c_i$, where lower values indicate reduced inter-adapter conflict, and (ii) classification accuracy on the \texttt{MMLU-CD} cross-domain test subsets.


\paragraph{Results and Analysis}
Tab.~\ref{tab:cross_domain_cr} and Fig.~\ref{fig:conflict} report the main experimental results. Tab.~\ref{tab:cross_domain_cr} presents the accuracy changes on the cross-domain test sets before and after applying the exploratory conflict resolution method \texttt{LoRA-CR}.  
From the results, we observe (i) By comparing performance before and after conflict resolution across three cross-domain subsets and three LoRA fusion methods, we find that accuracy improves in 8 out of 9 cross-domain evaluation settings after applying \texttt{LoRA-CR}, with average gains ranging from $1\%\sim4\%$ (specifically, $+1.12\%$ for FedAvg, $+2.27\%$ for FedSA-LoRA, and $+1.62\%$ for FFA-LoRA). These results indicate that even a simple and exploratory conflict resolution strategy can effectively improve LoRA fusion performance in cross-domain reasoning scenarios, highlighting the potential for further investigation into more sophisticated conflict-aware fusion methods.  
(ii) After introducing conflict resolution, the fused models outperform the Base Model without LoRA fusion in most settings, with the only exceptions being FedAvg and FedSA-LoRA on the SC\&STEM subset. This observation suggests that LoRA fusion is feasible for cross-domain tasks, but its effectiveness strongly depends on accurately modeling and mitigating cross-domain conflicts, motivating the need for more advanced conflict-aware fusion strategies to realize the benefits of fusion fully.  
Fig.~\ref{fig:conflict} further illustrates the changes in the average conflict score before and after applying conflict resolution. We observe that introducing \texttt{LoRA-CR} reduces the conflict metric $\bar{c}$ by approximately $\sim0.2\%$, quantitatively validating the effectiveness of the proposed method in alleviating representational conflicts among adapters.

\begin{figure}[t]
    \centering
    \includegraphics[width=0.95\linewidth]{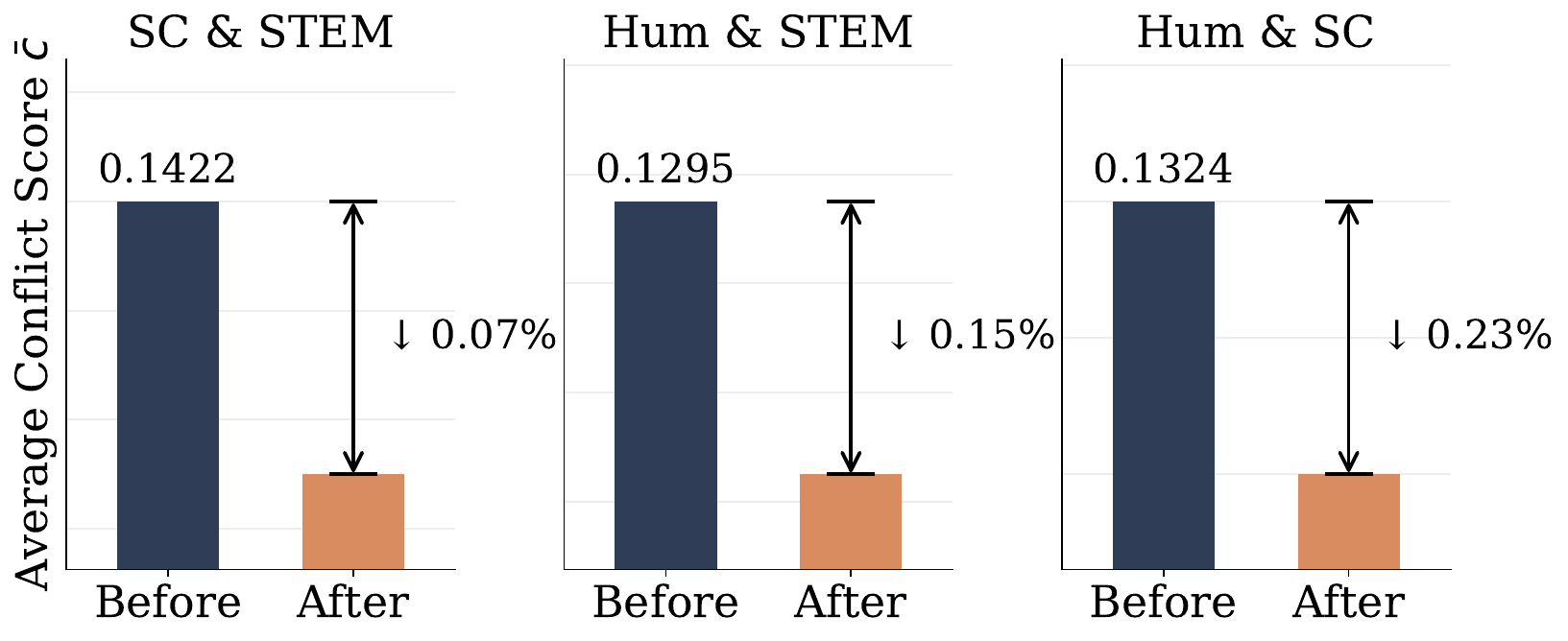}
    \caption{Conflict scores $\bar{c}$ on cross-domain \texttt{MMLU-CD} test subsets before and after conflict resolution using LLaMA-3-8B.}
    \label{fig:conflict}
\end{figure}


\paragraph{Impact of Pruning Ratio}
Fig.~\ref{fig:prune_rate} reports results on the test subset Hum\&SC using the fusion method FedSA-LoRA with LLaMA-3-8B, showing the effect of different pruning ratios on conflict resolution performance. We observe that (i) \texttt{LoRA-CR} consistently improves performance across all pruning ratios, demonstrating its robustness; and that (ii) the performance gains increase monotonically with higher pruning ratios: while moderate pruning (e.g., $0.4$) already yields noticeable improvements, the gains become substantially larger at higher pruning ratios ($0.6$ and $0.8$).
This trend can be attributed to intensified parameter conflicts under heavier pruning: as the number of available parameters and representational dimensions in LoRA adapters is reduced, the remaining representation space becomes increasingly constrained, leading to stronger competition among domain-specific updates along shared directions during fusion. 
\texttt{LoRA-CR} mitigates such directional conflicts, suppresses destructive interference, and enables better preservation of complementary cross-domain knowledge, resulting in larger gains at higher pruning ratios.

\begin{figure}[t]
    \centering
    \includegraphics[width=\linewidth]{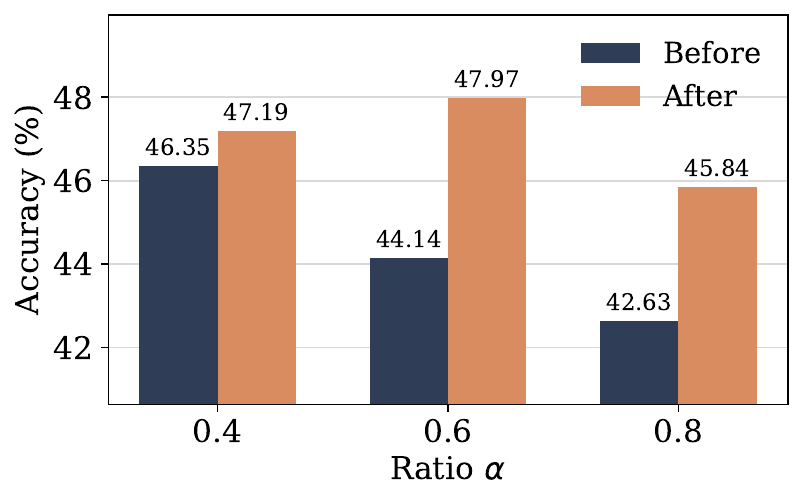}
    \caption{Effect of pruning ratio on the Hum\&SC cross-domain subtask using FedSA-LoRA with LLaMA-3-8B.}
    \label{fig:prune_rate}
\end{figure}

\paragraph{Case Study: Cross-Domain Reasoning and Conflict Resolution}
To understand the cross-domain reasoning and the effectiveness of \texttt{LoRA-CR}, we present a representative case study in Fig.~\ref{fig:case_study}. The question concerns a philosophical debate on psychological egoism and requires integrating knowledge from both philosophy and biology, specifically reasoning about how biological evidence from a squirrel population study challenges a descriptive philosophical claim.
The base LLaMA-3-8B model selects an incorrect answer (C), indicating a failure to relate biological evidence to philosophical reasoning. Applying FFA-LoRA without conflict resolution leads the model to choose option (B), which incorrectly emphasizes ethical egoism, suggesting that biological knowledge is diluted during LoRA fusion due to cross-domain representation conflicts.
In contrast, after introducing \texttt{LoRA-CR}, the FFA-LoRA model correctly selects option (A), which jointly accounts for biological and philosophical considerations. This shows that conflict resolution effectively mitigates conflicts between domain-specific LoRA adapters, enabling complementary cross-domain knowledge to be used for correct reasoning.
Overall, this case study qualitatively confirms the necessity of conflict resolution for effective cross-domain LoRA fusion and highlights the limitations of naive parameter aggregation.

\begin{figure}[t]
    \centering
    \includegraphics[width=\linewidth]{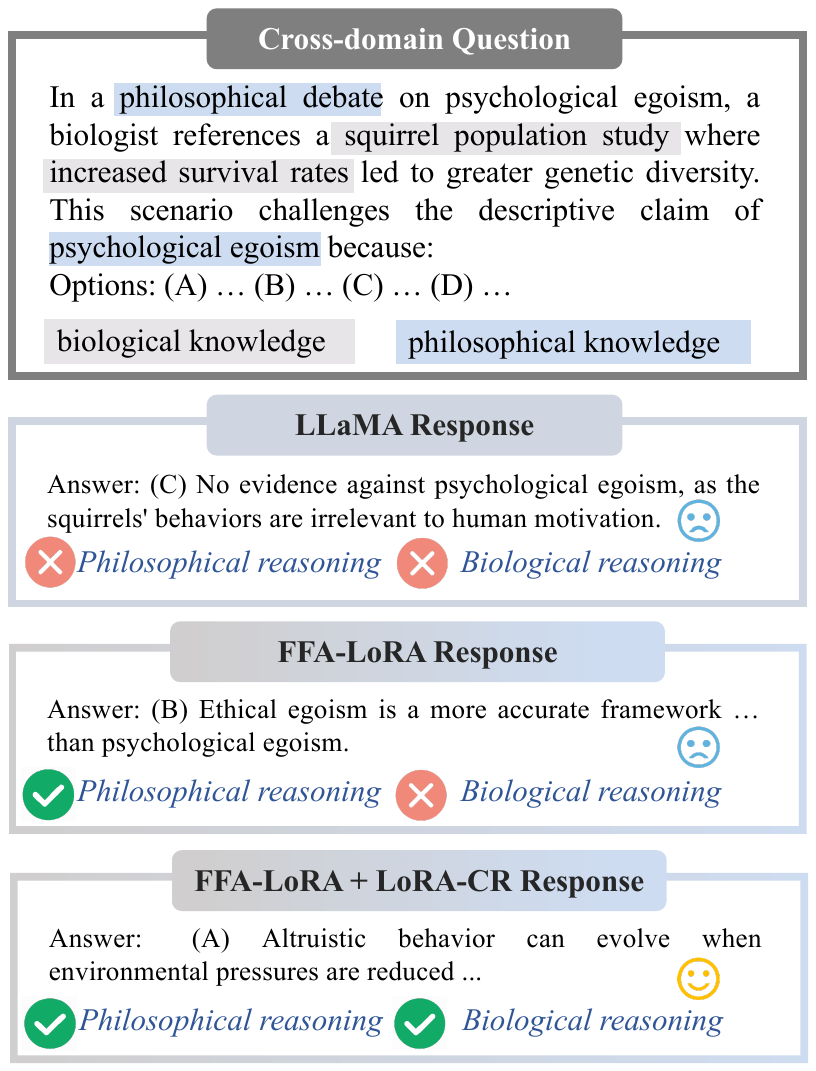}
    \caption{An illustration of a cross-domain example and FFA-LoRA response before and after \texttt{LoRA-CR}. The query requires jointly using philosophical and biological knowledge. 
    }
    \label{fig:case_study}
\end{figure}

\section{Conclusion}
In this work, we revisit cloud–edge collaboration for large language models from a cross-domain perspective and identify two key limitations: stringent resource constraints on edge devices and the lack of rigorous cross-domain evaluation under privacy-preserving settings. To address these issues, we propose the \texttt{prune--train--recover} framework for lightweight adaptation on resource-limited edges, and introduce \texttt{MMLU-CD}, a benchmark for systematically evaluating cross-domain tasks.
Our empirical study shows that existing LoRA fusion methods often fail to support cross-domain tasks due to unresolved representation conflicts among domain-specific LoRA adapters. To mitigate this, we propose \texttt{LoRA-CR}, a simple, plug-and-play conflict resolution module that can be seamlessly integrated into existing LoRA fusion pipelines. Despite its simplicity, \texttt{LoRA-CR} improves cross-domain performance and reduces inter-adapter conflicts, underscoring the importance of conflict-aware fusion for effective cloud–edge collaboration.
We hope that our findings draw attention to representation conflict as a fundamental challenge in cloud–edge collaboration for LLMs, and inspire future research toward more principled, conflict-aware approaches for cross-domain tasks.

\section*{Impact Statement}

This work is motivated by the limitations of cloud–edge collaboration for LLMs under privacy-preserving constraints. We propose a pruning-based adaptation framework and a cross-domain benchmark to advance the feasibility of cloud–edge collaboration in more complex and realistic task settings. Through systematic evaluation, we reveal a shortcoming of existing LoRA fusion methods on cross-domain tasks, highlighting an important yet underexplored challenge. In addition, we present an exploratory approach to solve it, providing methodological insights and a foundation to inspire future research in this direction.





\bibliography{main}
\bibliographystyle{icml2026}

\newpage
\appendix
\onecolumn
\section{Appendix}

\subsection{Structured Pruning} 
\label{app_pruning}
To generate resource-efficient edge models, we are inspired by LoRAM \cite{zhang2025train} and adopt LLM-Pruner \citep{ma2023llm}, which prunes structured groups based on their estimated importance. For a group $G$ containing weights ${W_i}$, the importance score is computed as:
\begin{align}
I_G &= \prod_{i=1}^{M} I_{W_i} \\
I_{W_i} &= \left| \Delta L(D) \right| 
= \left| L_{W_i}(D) - L_{W_i=0}(D) \right| \nonumber \\
&= \left| \frac{\partial L(D)}{\partial W_i} W_i 
- \frac{1}{2} W_i^\top H W_i + O \left( \| W_i \|^3 \right) \right|
\end{align}
where $L(D)$ denotes the loss on dataset $D$. Structured groups are identified through a dependency graph (DepGraph) \citep{fang2023depgraph} to maintain model consistency.

The pruning ratio $\alpha$ is a hyper-parameter, and the top $1 - \alpha$ fraction of groups is retained:
$
\mathcal{G}_{\text{selected}} = \left\{ G_i \mid I_{G_i} \geq \text{Quantile}_{a}\left(\{I_G\}\right) \right\}.
$
Tab.~\ref{tab:para_comparison} shows parameter reduction under different pruning ratios.

\begin{table*}[h]
\centering
\caption{Parameter Reduction under Different Ratios.}
\begin{tabular}{c|c|c|c}
\toprule
\textbf{Ratio} & \textbf{Exact Ratio} & \textbf{Param Before} & \textbf{Param After} \\
\midrule
0.2 & 0.16 & \multirow{4}{*}{8B (8,030,261,248)} & 6.7B (6,734,745,600) \\
0.4 & 0.34 &  & 5.2B (5,271,851,008) \\
0.6 & 0.50 &  & 3.9B (3,976,728,576) \\
0.8 & 0.69 &  & 2.5B (2,513,833,984) \\
\bottomrule
\end{tabular}
\label{tab:para_comparison}
\end{table*}




\subsection{Datasets}
\label{app_datasets}
\subsubsection{Data Source: MMLU}
\label{data_source}
MMLU \citep{hendrycksmeasuring} is a large-scale multi-subject benchmark designed to evaluate the broad knowledge and reasoning capabilities of language models \cite{liu2020co}. It consists of multiple-choice questions spanning a wide range of academic disciplines, with each question being self-contained and labeled by a fine-grained subject category. The benchmark is organized to test factual recall, conceptual understanding, and domain-specific reasoning under a unified evaluation protocol. In our work, MMLU serves as the foundational data source for constructing cross-domain task instances. Following the original subject taxonomy of MMLU, we partition all subjects into three disjoint domain groups: Humanities, Social Sciences, and STEM. Table~\ref{tab:mmlu_domain_subjects} summarizes the subject composition of the Humanities, Social Sciences, and STEM domains following the original MMLU taxonomy.

\begin{table*}[t]
\centering
\small
\caption{Subject composition of the three domains in MMLU.}
\label{tab:mmlu_domain_subjects}
\begin{tabular}{p{0.22\textwidth} p{0.74\textwidth}}
\toprule
\textbf{Domain} & \textbf{Subjects} \\
\midrule
\textbf{Humanities} &
Formal Logic;
High School European History;
High School US History;
High School World History;
International Law;
Jurisprudence;
Logical Fallacies;
Moral Disputes;
Moral Scenarios;
Philosophy;
Prehistory;
Professional Law;
World Religions
\\
\midrule
\textbf{Social Sciences} &
Econometrics;
High School Geography;
High School Gov’t and Politics;
High School Macroeconomics;
High School Microeconomics;
High School Psychology;
Human Sexuality;
Professional Psychology;
Public Relations;
Security Studies;
Sociology;
US Foreign Policy
\\
\midrule
\textbf{STEM} &
Abstract Algebra;
Anatomy;
Astronomy;
College Biology;
College Chemistry;
College Computer Science;
College Mathematics;
College Physics;
Computer Security;
Conceptual Physics;
Electrical Engineering;
Elementary Mathematics;
High School Biology;
High School Chemistry;
High School Computer Science;
High School Mathematics;
High School Physics;
High School Statistics;
Machine Learning
\\
\bottomrule
\end{tabular}
\end{table*}

\subsubsection{Data Construction}
\label{data_prompt}
To construct a high-quality dataset for cross-domain tasks, we develop a multi-stage automatic data synthesis pipeline and instantiate it using DeepSeek-R1 \cite{guo2025deepseek}. Given two domain-specific QA datasets, our goal is to generate natural and semantically coherent cross-domain multiple-choice questions that require joint reasoning over both domains rather than isolated domain knowledge. The pipeline consists of three sequential steps: mergeability assessment, core knowledge abstraction, and cross-domain question generation. The prompts used in each stage of the data construction pipeline are summarized in Table~\ref{tab:prompt}.

\begin{table*}[h]
\centering
\caption{Prompts designed for the three-stage cross-domain data synthesis pipeline.}
\label{tab:prompt}
\small
\renewcommand{\arraystretch}{1.35} 
\begin{tabular}{p{0.96\textwidth}}
\toprule
\textbf{Step 1: Mergeability Assessment} \\
\midrule
\textbf{Instruction:} Evaluate whether the following two QA pairs are suitable for creating a cross-domain question. Determine if the concepts naturally integrate or if they are disjoint. \\

\textbf{Criteria:} 
$\bullet$ If concepts from both pairs can be combined naturally $\rightarrow$ \texttt{"Yes"} \\
$\bullet$ If they are too unrelated $\rightarrow$ \texttt{"No"} \\

\textbf{Example (Positive):} \\
\textit{Q1 (Finance):} How do banks define high-risk transaction patterns? \\
\textit{Q2 (ML):} How does anomaly detection identify unusual behaviors? \\
\textit{Reasoning:} Credit card risk control relies directly on anomaly-detection principles. $\rightarrow$ \textbf{Mergeable.} \\

\textbf{Example (Negative):} \\
\textit{Q3 (Ecology):} Why does deforestation reduce biodiversity? \\
\textit{Q4 (Architecture):} What is the purpose of a CPU cache coherence protocol? \\
\textit{Reasoning:} These topics belong to unrelated domains with no shared concepts. $\rightarrow$ \textbf{Not Mergeable.} \\

\textbf{Output Format:} Answer strictly in JSON format.
\texttt{\{ "mergeable": "Yes/No", "reason": "Short explanation" \}} \\

\midrule
\textbf{Step 2: Core Knowledge Abstraction} \\
\midrule
\textbf{Instruction:} Extract the core knowledge point behind each QA pair (1--2 lines). \\
\textbf{Constraints:} Do NOT rewrite the question. Do NOT provide the answer. \\
\textbf{Output Format:}
\texttt{\{ "abstract1": "Core idea of Q1", "abstract2": "Core idea of Q2" \}} \\

\midrule
\textbf{Step 3: Cross-Domain Question Generation} \\
\midrule
\textbf{Instruction:} Use the original questions and extracted core ideas to create \textbf{ONE} integrated cross-domain multiple-choice question. First, formulate the question; second, design four corresponding choices. \\
\textbf{Rules:} \\
1. Generate four options (A--D) with only one correct answer. \\
2. Ensure the integration of domains is logical and natural. \\
\textbf{Output Format:}
\texttt{\{ "prompt": "Integrated Question...", "completion": "(X) Correct Answer..." \}} \\
\bottomrule
\end{tabular}
\end{table*}

\subsubsection{Dataset Statistics}
\label{app_data_statistics}

Table~\ref{tab:mmlu_cd_stats2} details the statistics of the \texttt{MMLU-CD} dataset. The left panel presents the distribution of in-domain samples across Humanities, Social Sciences, and STEM, which are split into training, validation, and testing sets for local adaptation. The right panel summarizes the newly constructed cross-domain test sets, comprising 1,500 instances across pairwise domain combinations (e.g., Humanities \& Social Sciences) to evaluate the model's capability in queries across different domains.

\begin{table}[t]
\centering
\caption{Dataset statistics of MMLU-CD. The training data follows the original MMLU subject taxonomy with three disjoint domain groups. Cross-domain test sets are newly constructed and do not overlap with in-domain test splits.}
\small
\begin{tabular}{lcccc}
\toprule
\textbf{Domain} & \textbf{Train} & \textbf{Val} & \textbf{Test} & \textbf{Total} \\
\midrule
Humanities & 4,178 & 522 & 523 & 5,223 \\
Social Sciences & 2,731 & 341 & 342 & 3,414 \\
STEM & 2,790 & 348 & 350 & 3,488 \\
\midrule
\textbf{In-Domain Total} & 9,699 & 1,211 & 1,215 & 12,125 \\
\bottomrule
\end{tabular}
\vspace{0.3cm}
\begin{tabular}{lc}
\toprule
\textbf{Cross-Domain Test Set} & \textbf{Amount} \\
\midrule
Hum \& SC & 500 \\
SC \& STEM & 500 \\
Hum \& STEM & 500 \\
\midrule
\textbf{Cross-Domain Total} & 1,500 \\
\bottomrule
\end{tabular}
\label{tab:mmlu_cd_stats2}
\end{table}

\subsection{Proof of Theorem~\ref{thm:conflict_metric}}
\label{app:conflict_metric}

We present a complete proof of Theorem~\ref{thm:conflict_metric} by explicitly constructing a shared latent subspace via singular value decomposition (SVD) \cite{golub2013matrix} and defining a direction-wise conflict score within this subspace. Low-rank updates and task-specific adaptations can be effectively analyzed through their principal singular directions, which capture shared representational structure while filtering out noise and task-specific components~\cite{hu2022lora,saha2021gradient}. SVD-based subspace constructions have been widely used to characterize interference and redundancy across tasks, where conflicts are identified as misaligned projections along shared principal directions~\cite{saha2021gradient,lopez2017gradient}. Building on these insights, we formalize the proposed conflict metric in the shared subspace, yielding a principled measure of directional interference among LoRA adapters.

\begin{proof}
Let $\{\Delta \mathbf{W}_i^{R}\}_{i=1}^N$ denote the recovered LoRA parameter updates.
We first construct a shared latent subspace that captures the principal directions common across all LoRA adapters via SVD.
Specifically, we concatenate the recovered updates along the column dimension and perform SVD:
\begin{equation}
[\Delta \mathbf{W}_1^{R}, \ldots, \Delta \mathbf{W}_N^{R}]
=
\mathbf{U}_{\text{share}} \boldsymbol{\Sigma} \mathbf{V}^\top,
\end{equation}
where $\mathbf{U}_{\text{share}} \in \mathbb{R}^{d \times r}$ spans a rank-$r$ orthonormal basis corresponding to the dominant shared directions across adapters.
Each LoRA update is then projected onto this shared subspace:
\begin{equation}
\mathbf{Z}_i
=
\mathbf{U}_{\text{share}}^\top \Delta \mathbf{W}_i^{R}
\in \mathbb{R}^{r \times m},
\end{equation}
and we denote by $\mathbf{z}_{i,k} = (\mathbf{Z}_i)_{k,:}$ the representation of the $i$-th adapter along the $k$-th shared direction.
To quantify agreement among adapters along direction $k$, we define an energy-weighted consensus direction
\begin{equation}
\bar{\mathbf{z}}_k
=
\frac{\sum_{i=1}^N \|\mathbf{z}_{i,k}\|_2 \mathbf{z}_{i,k}}
{\sum_{i=1}^N \|\mathbf{z}_{i,k}\|_2},
\end{equation}
which captures the dominant aligned trend across adapters, weighted by their effective magnitudes.
We then define the direction-wise conflict score as
\begin{equation}
c_k
=
\frac{
\sum_{i=1}^N
\|\mathbf{z}_{i,k}\|_2
\left(1 - \cos(\mathbf{z}_{i,k}, \bar{\mathbf{z}}_k)\right)
}{
2 \sum_{i=1}^N \|\mathbf{z}_{i,k}\|_2
}.
\end{equation}

By construction, $\cos(\mathbf{z}_{i,k}, \bar{\mathbf{z}}_k) \in [-1,1]$, which implies that each term
$\frac{1}{2}(1 - \cos(\mathbf{z}_{i,k}, \bar{\mathbf{z}}_k))$
lies in $[0,1]$. Since $c_k$ is a convex combination of these terms with nonnegative weights
$\|\mathbf{z}_{i,k}\|_2$, it follows that $c_k \in [0,1]$.

Moreover, when all $\mathbf{z}_{i,k}$ are perfectly aligned with the consensus direction, we have
$\cos(\mathbf{z}_{i,k}, \bar{\mathbf{z}}_k) = 1$ for all $i$, yielding $c_k = 0$.
Conversely, as the representations become increasingly misaligned or opposing, the cosine similarity decreases and $c_k$ approaches $1$, indicating severe representation conflict.
This establishes a direction-wise metric that quantifies the degree of representation conflict among LoRA adapters in a shared latent subspace, completing the proof.
\end{proof}

\subsection{Conflict Resolution Function $f_\text{CR}$}
\label{app:LoRA-CR}


The proposed conflict detection metric provides a quantitative estimate of conflict strength for each shared latent direction, inspired by recent findings that destructive interference during model or adapter merging often arises from misaligned or antagonistic parameter directions rather than from magnitude differences alone \cite{yadav2023ties, fang-etal-2025-see}. Leveraging this signal, \texttt{LoRA-CR} performs \emph{direction-wise conflict resolution} within the shared low-rank subspace, selectively suppressing directions associated with strong conflicts. 
Unlike global aggregation schemes that average or trim entire task vectors \cite{cheng2025whoever, sun2025cat, choi2024revisiting}, \texttt{LoRA-CR} avoids direct global aggregation across different LoRA adapters, thereby preserving their individual characteristics, a principle consistent with subspace decomposition and orthogonality-based approaches for mitigating cross-task interference \cite{yang2024model}. As a result, \texttt{LoRA-CR} provides a lightweight, plug-and-play mechanism for conflict mitigation that is compatible with existing LoRA fusion pipelines.

\paragraph{Conflict Gating.}
For each shared direction $k$, we define a conflict gate:
$g_k = 1 - c_k,$
such that directions with stronger conflicts are more suppressed.

\paragraph{Directional Consistency Score.}
To further penalize updates that deviate from the consensus direction, we introduce a directional consistency score:
\begin{equation}
s_{i,k} = \max\big(0,\; \cos(\mathbf{z}_{i,k}, \bar{\mathbf{z}}_k)\big),
\end{equation}
which rewards only those components that are aligned with the consensus and explicitly suppresses opposing updates.

\paragraph{Direction-wise De-confliction.}
For the $i$-th LoRA adapter along the $k$-th shared direction, the de-conflicted representation is defined as:
\begin{equation}
\mathbf{z}_{i,k}^{\text{fuse}}
=
g_k \, s_{i,k}\, \mathbf{z}_{i,k}.
\end{equation}
This formulation ensures that (i) directions with strong conflicts are globally downweighted; (ii) LoRA updates that are misaligned with the consensus are selectively attenuated.

\paragraph{Reconstruction of De-conflicted LoRA Adapters.}
The de-conflicted shared-subspace representation of the $i$-th LoRA is obtained by stacking all direction-wise components:
\begin{equation}
\small
\mathbf{Z}_i^{\text{fuse}}
=
\begin{bmatrix}
(\mathbf{z}_{i,1}^{\text{fuse}})^\top\\
\vdots\\
(\mathbf{z}_{i,r}^{\text{fuse}})^\top
\end{bmatrix}
\in \mathbb{R}^{r \times d}.
\end{equation}
The final de-conflicted LoRA parameters are then reconstructed as:
\begin{equation}
\small
\Delta \mathbf{W}_i^{CR}
=
\mathbf{U}_{\text{share}}\, \mathbf{Z}_i^{\text{fuse}}.
\end{equation}

\subsection{Experimental Setup}
\label{app_exp_setup}

\subsubsection{Cloud--Edge Setting and LLM Backbone}
\label{app_setting}
We simulate a heterogeneous cloud--edge collaboration scenario with $N{=}2$ distinct clients. To mimic real-world data heterogeneity, each client holds private data exclusively from one specific knowledge domain: Humanities, Social Sciences, and STEM. The cloud hosts a frozen backbone LLM, LLaMA-3-8B \citep{dubey2024llama}. Regarding the adaptation mechanism, each client optimizes its local adapter modules without accessing other domains. They then upload their locally trained adapter parameters to the cloud for recovery. Finally, the cloud produces a single fused adapter injected into the backbone for global evaluation. Unless otherwise stated, all results imply this three-client, domain-partitioned configuration. 

\subsubsection{Adapter Configuration and Local Optimization}
\label{app_lora_config}
We employ LoRA \cite{hu2022lora} for parameter-efficient adaptation. Specifically, we inject LoRA adapters into all linear layers within the attention and MLP blocks of the backbone. We set the rank to $8$ and the scaling factor to $\alpha{=}16$. During the local phase, optimization is restricted to the edge’s in-domain training split (see Appendix~\ref{app_datasets}); the LLM backbone parameters remain entirely frozen. We use AdamW \cite{loshchilov2018decoupled} as the optimizer. Finally, the optimized client adapters are processed via our \texttt{prune--train--recover} pipeline before cloud fusion.

\subsubsection{LoRA Fusion Methods}
\label{app_lora_fusion}
We incorporate representative LoRA aggregation strategies into our prune--recover--fuse framework to evaluate their compatibility with the recovered adapters. Formally, let $\{\hat{\theta}_i\}_{i=1}^N$ denote the set of recovered LoRA parameters (i.e., matrices $A_i, B_i$) from $N$ clients available on the cloud. The goal is to derive a unified global adapter $\theta_{\text{cloud}}$.
\begin{itemize}[nosep, leftmargin=*]
    \item \textbf{FedAvg} \citep{mcmahan2017communication} applies element-wise averaging to the client parameters. In our context, since the backbone is frozen, we aggregate the recovered adapter weights:
    \begin{equation}
        \theta_{\text{cloud}} = \frac{1}{N} \sum_{i=1}^N \hat{\theta}_i.
    \end{equation}
    This serves as the standard baseline for verifying the quality of the recovered parameters without complex fusion logic.
    \item \textbf{FedSA-LoRA} \citep{guo2025selective} integrates Selective Aggregation (FedSA) into the LoRA paradigm. In the standard formulation, each client trains a local LoRA component (commonly denoted $\mathbf{B}$) on private data while the global component (commonly denoted $\mathbf{A}$) is shared/aggregated under selective rules. We instantiate its aggregation rule on recovered adapters within our pipeline.
    \item \textbf{FFA-LoRA} \citep{sun2024improving} proposes a \textit{Freezing of Adaptation} strategy to stabilize federated fusion. It typically keeps one LoRA factor (e.g., $A$) frozen while aggregating the other ($B$). We adapt this by applying its fusion protocol to our recovered pairs $(\hat{A}_i, \hat{B}_i)$, ensuring the aggregated cloud adapter maintains the structural constraints required by FFA-LoRA (e.g., rank consistency).
\end{itemize}

\subsubsection{Datasets and Splits}
\label{app_data}
We construct a cross-domain benchmark, \texttt{MMLU-CD}, derived from the MMLU \citep{hendrycksmeasuring}. The dataset is organized into three disjoint domains: Humanities, Social Sciences, and STEM. While the in-domain splits strictly follow the original MMLU taxonomy, the cross-domain test sets (i.e., Hum\&SC, SC\&STEM, Hum\&STEM) are newly synthesized via a multi-stage prompting pipeline. We ensure that these synthesized sets do not overlap with any in-domain training or test data. Detailed statistics and synthesis prompts are provided in Appendix~\ref{app_data_statistics} and Appendix~\ref{data_prompt}, respectively.

\subsubsection{Evaluation Metrics}
\label{app_metrics}
We evaluate our framework using two complementary metrics:
\begin{itemize}[nosep, leftmargin=*]
    \item \textbf{Conflict mitigation.} To quantify the internal consistency among client adapters, we report the average direction-wise conflict score $\bar{c}$.
    (lower is better), computed from the recovered LoRA updates and used to measure how well a method mitigates cross-adapter interference. This metric computes the cosine similarity statistics between the recovered parameter updates of different clients. A lower score indicates reduced interference and better alignment between domain-specific knowledge modules.
    \item \textbf{Downstream performance.} For the MMLU-CD benchmark, where each instance is a 4-way multiple-choice question, we report the standard Top-1 Accuracy. Specifically, we compare the model's assigned probabilities for the four option tokens (A, B, C, D) and select the token with the highest likelihood. Performance is measured by the percentage of instances where the predicted token matches the ground truth.
\end{itemize}

\end{document}